\documentclass[11pt]{article}
\usepackage{amsmath,epsfig}
\newenvironment{proof}{\makebox[7ex][l]{\it Proof:\/}}{\hfill\/ \hfill\/ {\it
Q.E.D.} \vspace{0.5ex}\\}

\newtheorem{thm}{Theorem}

\begin{document}

\makeatletter

\title{\bf Robust and Gain-Scheduling ${\cal H}_2$ Control Techniques \\
for LFT Uncertain and Parameter-Dependent Systems}

\author{Fen Wu\thanks{%
E-mail: \texttt{fwu@ncsu.edu},
Phone: (919) 515-5268.  
} 
\vspace*{0.1in} \\
Dept. of Mechanical and Aerospace Engineering \\
North Carolina State University \\
Raleigh, NC 27695}

\makeatother

\date{February 5, 2026}

\maketitle

\begin{abstract}
This paper addresses the robust ${\cal H}_2$ synthesis problem for linear fractional transformation (LFT) systems subject to structured uncertainty (parameter) and white-noise disturbances. 
By introducing an intermediate matrix variable, we derive convex synthesis conditions in terms of linear matrix inequalities (LMIs) that enable both robust and  gain-scheduled controller design for parameter-dependent systems. 
The proposed framework preserves the classical white-noise and impulse-response interpretation of the ${\cal H}_2$ criterion while providing certified robustness guarantees,  thereby extending optimal ${\cal H}_2$ control beyond the linear time-invariant setting. 
Numerical and application examples demonstrate that the resulting robust ${\cal H}_2$ controllers achieve significantly reduced conservatism and improved disturbance rejection compared with conventional robust 
${\cal H}_\infty$-based designs. 
\end{abstract}

{\bf Keywords:}
Linear fractional transformation system; structured uncertainty (parameter);
${\cal H}_2$ performance; robust and gain-scheduling control; LMIs.


\section{Introduction}
\label{sec:intro}

Robust control methodologies such as ${\cal H}_\infty$ control and $\mu$-synthesis have become standard tools for the design of complex systems operating under uncertainty \cite{PacD93,BalDGPS91,ZhoDG96,DulP2000}. 
These approaches emphasize worst-case performance and guarantee robust stability and disturbance attenuation in the presence of structured or unstructured uncertainties. 
When an uncertain system ${\bf T}$ is robustly stable, performance is typically assessed by the ${\cal L}_2$-gain from disturbance signals to regulated outputs, yielding a conservative but reliable characterization of system behavior under adversarial conditions.

However, the strong emphasis on worst-case robustness inherent in ${\cal H}_\infty$-based design often leads to unnecessarily conservative performance, particularly in applications where disturbances are stochastic in nature rather than adversarial. When disturbances are better modeled as white noise processes, the ${\cal H}_2$ norm provides a more meaningful performance metric, as it quantifies the expected energy of the regulated output rather than the worst-case amplification \cite{ZhoDG96}. In contrast, the induced ${\cal L}_2$ norm underlying ${\cal H}_\infty$ control effectively measures the response to persistent sinusoidal disturbances, which can significantly overestimate performance degradation in many practical systems.

Despite its appealing interpretation and widespread use in linear time-invariant (LTI) systems, the extension of the ${\cal H}_2$ norm to uncertain and parameter-varying systems remains challenging. 
Early attempts to generalize ${\cal H}_2$ performance to uncertain systems were shown to be basis dependent and computationally demanding \cite{Sto93,ZhoGBD94}. 
Subsequent research has sought to address these limitations by developing alternative formulations of robust ${\cal H}_2$  performance analysis. 
For example, reference \cite{Fer97} provided analysis results of robust ${\cal H}_2$
performance for uncertain linear systems using multipliers. 
References \cite{Pag99,Pag99a} provided rigorous interpretations of robust ${\cal H}_2$ measures that preserve the white-noise rejection perspective.
These works also demonstrated that tight performance bounds can be obtained and that infinite-dimensional conditions may be reduced to tractable state-space characterizations.
A comprehensive comparison of these robust ${\cal H}_2$ measures and their theoretical properties is reported in \cite{PagF2000}.
Reference \cite{GohW97} developed a rigorous robust
${\cal H}_2$ performance analysis framework based on duality theory and basis-function representations, preserving the classical white-noise interpretation while explicitly accounting for structured uncertainty.
Their results yield tight, finite-dimensional state-space conditions for computing worst-case
${\cal H}_2$ performance bounds and form a theoretical foundation for later multiplier- and LMI-based robust
${\cal H}_2$ methods.

In parallel, gain-scheduling and linear parameter-varying (LPV) control frameworks have been extensively investigated as a means to address structured parameter variations in control systems. 
Quadratic and parameter-dependent ${\cal H}_2$ and mixed ${\cal H}_2/{\cal L}_\infty$ performance analysis and synthesis techniques have been proposed for LPV systems, including full-block multiplier and loop-shaping-based methods \cite{PreP2008}, static output-feedback designs with guaranteed ${\cal H}_2$ performance bounds \cite{ZhaL2018}, and gain-scheduled controllers accounting for inexact scheduling parameter measurements \cite{SatEP2010,DaaBG2008}. 
Relevant multi-objective control problems have been addressed in \cite{ZhoGBD94,SchGC2002}. 
Although substantial advances have been made, current approaches frequently depend on restrictive assumptions, incur conservatism arising from insufficient disturbance characterization, or handle robustness and gain scheduling in a loosely coupled manner. 

Motivated by these limitations and the later robust ${\cal H}_2$ problem formulation in \cite{Pag99,Pag99a}, this paper investigates robust and gain-scheduled ${\cal H}_2$ control from a unified perspective. 
The objective is to explicitly account for uncertainty while retaining the physical interpretability and performance advantages of the ${\cal H}_2$ criterion. 
In contrast to classical worst-case designs, the proposed framework seeks to balance robustness and average performance by embedding uncertainty descriptions directly into the ${\cal H}_2$ analysis and synthesis conditions. 

This paper develops a unified framework for robust and gain-scheduled ${\cal H}_2$ control of uncertain and parameter-dependent systems, addressing a fundamental gap between classical ${\cal H}_2$ optimal control and robust design methodologies. By explicitly incorporating uncertainty into the ${\cal H}_2$ performance formulation, the proposed approach preserves the physical interpretation of white-noise disturbance rejection while providing certified robustness guarantees.
For time-varying uncertainty and scheduling parameters, finite-dimensional synthesis conditions are derived that parallel robust and gain-scheduled ${\cal H}_\infty$ control techniques. 
Central to the proposed method is the introduction of an intermediate matrix variable and the use of ideas originally proposed in \cite{DeGB2002,EbiH2005}, which enable the derivation of convex synthesis conditions expressed in terms of tractable LMIs (or PLMIs).
The resulting framework unifies classical LTI ${\cal H}_\infty$ control, robust ${\cal }_2$ synthesis, and LPV gain-scheduled designs within a single theoretical setting. In contrast to worst-case induced-norm approaches, the proposed methodology provides performance guarantees that are better aligned with stochastic disturbance environments and impulse-response objectives. Several numerical and application examples are presented to demonstrate that the proposed robust ${\cal H}_2$ control techniques achieve reduced conservatism and improved disturbance rejection when compared with existing robust and gain-scheduled ${\cal H}_2$ and ${\cal H}_\infty$-based methods.  
Parts of this work were previously reported in a preliminary conference paper \cite{WuD2005}.  

{\bf Notations}. 
${\bf R}^p$ and ${\bf R}^{p \times q}$ denote real-valued vectors of dimension 
$p$ and real-valued matrices of dimension $p \times q$, respectively. 
The set of real symmetric $n \times n$ matrices is denoted by ${\bf S}^{n\times n}$, and ${\bf S}^{n\times n}_+$ denotes the subset of positive definite matrices.
A block-diagonal matrix with diagonal blocks $X_1, \ldots, X_p$ is denoted by ${\rm diag} \left\{ X_1, \ldots, X_p \right\}$.
For a matrix $X \in {\bf R}^{p \times p}$ with elements $x_{ij}$, ${\rm diag} \left\{
X \right\}$ denotes the vector $[x_{11}, \ldots, x_{pp}]^T$, and ${\rm tr}
\left\{X \right\} = \sum_{i=1}^p x_{ii}$ denotes its trace. 
For symmetric matrices $X, Y \in {\cal S}^{n \times n}$, the notation 
$X \leq Y$ ($X < Y$) that $Y - X$ is positive 
semidefinite (positive definite). 
An infinite sequence $x := \{ x_1, x_2, \ldots \}$, with $x_i \in {\bf R}^p$,
is said to belong to ${\ell}_2^{p}$ if $\sum_{i=1}^{\infty} x_i^T x_i < \infty$. 

The remainder of the paper is organized as follows. 
In Section 2, we introduce the system model, including the linear fractional transformation (LFT) representation of the uncertain plant, and summarize the assumptions on the uncertainty and system properties. Section 3 presents the robust ${\cal H}_2$ state-feedback control design, including the derivation of convex synthesis conditions and the corresponding controller construction. 
Section 4 extends the results to gain-scheduling 
${\cal H}_2$ output-feedback control for parameter-dependent systems, with detailed LMI-based synthesis conditions and controller recovery. 
Section 5 provides numerical examples to illustrate the effectiveness and advantages of the proposed methods. 
Finally, Section 6 concludes the paper and discusses potential directions for future research.
 
\section{Robust ${\cal H}_2$ Performance Analysis}
\label{sec:robh2ana}

Consider an uncertain discrete-time linear system described in linear fractional transformation (LFT) form
\begin{align}
\begin{bmatrix}
x(k+1) \\
q(k) \\
e(k)
\end{bmatrix} & = \begin{bmatrix}
A & B_0 & B_1 \\
C_0 & D_{00} & 0 \\
C_1 & D_{10} & 0
\end{bmatrix}
\begin{bmatrix}
x(k) \\
p(k) \\
d(k)
\end{bmatrix} \label{eqn:ulsys1} \\
p(k) & = \Delta(k) q(k) \label{eqn:ulsys2}
\end{align}
where $x(k), x(k+1) \in {\bf R}^n$, $p(k), q(k) \in {\bf R}^{n_p}$,
$e \in {\bf R}^{n_e}$, and $d \in {\bf R}^{n_d}$.
All state-space matrices are of compatible dimensions, and the nominal matrix $A$ is assumed to be stable. 
The uncertainty $\Delta$ time-varying and belongs to the structured set 
\begin{align*}
{\boldsymbol \Delta} & := \left\{ {\rm diag} \left\{\delta_1 I_{m_1},
\ldots, \delta_s I_{m_s}, \Delta_{s+1}, \ldots, \Delta_{s+f} \right\}: \delta_i: {\ell}_2 \rightarrow {\ell}_2, \right. \\
& \left. \hspace*{0.25in} \| \delta_i \| \leq 1, \ i = 1, \ldots, s, \ \Delta_{s+j}: {\ell}_2 \rightarrow
{\ell}_2^{r_j \times r_j}, \| \Delta_{s+j} \| \leq 1, \ j = 1, \ldots, f \right\},
\end{align*}
where the induced $\ell_2$ norm is used, and $\sum_{i=1}^s m_i + \sum_{j=1}^f r_j = n_p$. 
To reduce conservatism in robust stability and performance analysis, we introduce
the following set of scaling matrices: 
\begin{align*}
\mathcal{D} & = \left\{ {\rm diag} \left\{D_1, \cdots, D_s, d_{s+1} I_{r_1}, \cdots, d_{s+f} I_{r_f} \right\}:
\right. \\
& \left. \hspace*{0.25in} D_i \in {\bf S}^{m_i \times m_i}_+, \ i = 1, \cdots, s, \
d_{s+j} > 0, \ j = 1, \cdots, f \right\}
\end{align*}
By construction, the commutation property $D \Delta = \Delta D$ holds for any $\Delta \in
{\boldsymbol \Delta}$ and $D \in \mathcal{D}$.

Following \cite{Pag99}, defining the family of signal sets
\[
\mathcal{W}_{\eta} := \left\{ f \in \ell_2^{p}: \; \left\| \int_0^s f(\omega)
f^{\star}(\omega) \: \frac{d \omega}{2\pi} - \frac{s \| f \|^2}{2p\pi} I_p
\right\|_{\infty} < \eta \right\},
\]
where $\| \cdot \|_{\infty}$ denotes the maximum absolute value of the matrix elements over $\omega \in [0, 2\pi]$, $\| f \|$
denotes the ${\ell}_2^{p}$ norm of $f$, and $f(\omega)$ denotes the Fourier transform of $f$. 
For a (not necessarily linear or time-invariant) system
$\mathbf{T}: \ell_2^{p} \rightarrow \ell_2^{q}$, define 
\[
\| \mathbf{T} \|_{\mathcal{W}_{\eta}} := \sup_{f \in \mathcal{W}_\eta, \; \| f
\|=1} \| \mathbf{T} f \|.
\]
It was then shown in \cite{Pag99} that
an appropriate extension of the ${\cal H}_2$ norm to such systems is given by 
\begin{equation}
\| \mathbf{T} \|_2 := \sqrt{p} \; \lim_{\eta \rightarrow 0} \| \mathbf{T}
\|_{\mathcal{W}_{\eta}}.
\label{eqn:h2normpag}
\end{equation}
This definition captures the white-noise response characteristics of ${\bf T}$ and is adopted throughout this paper.

For systems with time-varying uncertainty, the robust ${\cal H}_2$ performance of \eqref{eqn:ulsys1}-\eqref{eqn:ulsys2} was analyzed in the frequency domain in \cite{Pag99}. 
Specifically, the existence of a scaling matrix $X \in {\cal D}$ and a frequency-dependent matrix function $Y(e^{j \omega})$ satisfying 
\begin{align}
M^\star(e^{j \omega}) \begin{bmatrix}
X & 0 \\
0 & I
\end{bmatrix} M(e^{j \omega}) - \begin{bmatrix}
X & 0 \\
0 & Y(e^{j \omega})
\end{bmatrix} & < 0 \\
\int_{0}^{2 \pi} {\rm tr} \left[ Y(e^{j \omega}) \right] \frac{d \omega}{2 \pi} & < \gamma^2
\end{align}
ensures a robust ${\cal H}_2$ performance bound $\gamma$, 
where
\[
M(e^{j \omega}) = \begin{bmatrix} D_{00} \\ D_{10} \end{bmatrix} +
\begin{bmatrix} C_{0} \\ C_{1} \end{bmatrix} (e^{j \omega} I - A)^{-1} B_0. 
\]
Following the approach in \cite{Pag99a}, these frequency-domain conditions can be converted into an equivalent state-space formulation. 
In particular, robust ${\cal H}_2$ performance is guaranteed 
if there exist positive-definite $P_-, P_+$ and $Q$, together with a scaling matrix $X \in {\cal D}$, such that
\begin{align}
\begin{bmatrix}
P_- & & \\
& X & \\
& & I
\end{bmatrix} - \begin{bmatrix}
A & B_0 \\
C_{0} & D_{00} \\
C_{1} & D_{10}
\end{bmatrix}
\begin{bmatrix}
P_- & \\
& X
\end{bmatrix}
\begin{bmatrix}
A^T & C_{0}^T & C_{1}^T \\
B_0^T & D_{00}^T & D_{10}^T
\end{bmatrix} & > 0 \label{eqn:h2ana1} \\
\begin{bmatrix}
P_+ & & \\
& X & \\
& & I
\end{bmatrix} - \begin{bmatrix}
A & B_0 \\
C_{0} & D_{00} \\
C_{1} & D_{10}
\end{bmatrix}
\begin{bmatrix}
P_+ & \\
& X
\end{bmatrix}
\begin{bmatrix}
A^T & C_{0}^T & C_{1}^T \\
B_0^T & D_{00}^T & D_{10}^T
\end{bmatrix} & > 0 \label{eqn:h2ana2} \\
\begin{bmatrix}
Q & B_1^T \\
B_1 & P_+ - P_-
\end{bmatrix} & > 0 \label{eqn:h2ana3} \\
{\rm tr} (Q) & < \gamma^2. \label{eqn:h2ana4}
\end{align}

These conditions provide a state-space characterization of
robust ${\cal H}_2$ performance that parallels classical
robust ${\cal H}_\infty$ results, thereby extending the
applicability of robust control theory to energy-based
performance measures.
Although the robust ${\cal H}_2$ analysis problem is now well
understood, there remains comparatively limited work on the synthesis of robust controllers based explicitly on the robust ${\cal H}_2$ criterion.


\section{Robust ${\cal H}_2$ State-Feedback Control}
\label{sec:robh2syn}

Consider the uncertain discrete-time linear system
\begin{align}
\begin{bmatrix}
x(k+1) \\
q(k) \\
e(k) \\
y(k)
\end{bmatrix} & = \begin{bmatrix}
A & B_0 & B_1 & B_2 \\
C_0 & D_{00} & 0 & D_{02} \\
C_1 & D_{10} & 0 & D_{12} \\
C_2 & D_{20} & D_{21} & 0
\end{bmatrix}
\begin{bmatrix}
x(k) \\
p(k) \\
d(k) \\
u(k)
\end{bmatrix} \\
p(k) & = \Delta(k) q(k), 
\end{align}
where $x(k) \in {\bf R}^n, u \in {\bf R}^{n_u}$, $y \in {\bf R}^{n_y}$, and other variables have compatible dimensions. We assume the following: 
\begin{description}
\item[(A1)]
$(A, B_2)$ is stabilizable and $(A, C_2)$ is detectable,
\item[(A2)]
$D_{01} = $ and $D_{11} = 0$.
\end{description}
In this section, we assume full-state availability ($y = x$) and seek a state-feedback controller 
\begin{align}
u(k) = F x(k)
\end{align}
that minimizes the robust ${\cal H}_2$ norm of the closed-loop system. The closed-loop dynamics are 
\begin{align*}
\begin{bmatrix}
x(k+1) \\
q(k) \\
e(k)
\end{bmatrix} 
& = \begin{bmatrix}
A + B_2 F & B_0 & B_1 \\
C_0 + D_{02} F & D_{00} & 0 \\
C_1 + D_{12} F & D_{10} & 0
\end{bmatrix}
\begin{bmatrix}
x(k) \\
p(k) \\
d(k)
\end{bmatrix} \\
p(k) & = \Delta(k) q(k). 
\end{align*}

To derive a tractable synthesis condition, we introduce an intermediate matrix $V$ as in \cite{DeGB2002,EbiH2005}, which replaces $P_-$ and $P_+$, providing common variables in the analysis inequalities and decoupling the state matrix $A$ from $P_-, P_+$. 
Using this transformation, the robust ${\cal H}_2$ state-feedback synthesis is given as follows. 

\begin{thm}
The closed-loop system is robustly stabilizable by a state-feedback controller and achieves a robust ${\cal H}_2$ norm less than $\gamma$ if there exist positive-definite matrices $P_-, P_+ \in
{\cal S}_+^{n \times n}$, $Q \in {\cal S}_+^{n_d \times n_d}$, a scaling
matrix $X \in {\cal D}$, and rectangular matrices $M \in {\bf R}^{n_u \times n}$
and $V \in {\bf R}^{n \times n}$ such that
\begin{align}
\begin{bmatrix}
V^T + V - P_+ & & (A V + B_2 M)^T & (C_0 V + D_{02} M)^T & (C_1 V + D_{12} M)^T \\
& X & X B_0^T & X D_{00}^T & X D_{12}^T \\
A V + B_2 M & B_0 X & P_+ & & \\
C_0 V + D_{02} M & D_{00} X & & X & \\
C_1 V + D_{12} M & D_{10} X & & & I
\end{bmatrix} & > 0 \label{eqn:h2LFTsf1} \\
\begin{bmatrix}
V^T + V - P_- & & (A V + B_2 M)^T & (C_0 V + D_{02} M)^T & (C_1 V + D_{12} M)^T \\
& X & X B_0^T & X D_{00}^T & X D_{12}^T \\
A V + B_2 M & B_0 X & P_- & & \\
C_0 V + D_{02} M & D_{00} X & & X & \\
C_1 V + D_{12} M & D_{10} X & & & I
\end{bmatrix} & > 0 \label{eqn:h2LFTsf2} \\
\begin{bmatrix}
Q & B_1^T \\
B_1 & P_+ - P_-
\end{bmatrix} & > 0 \label{eqn:h2LFTsf3} \\
{\rm tr}(Q) & < \gamma^2 \label{eqn:h2LFTsf4}
\end{align}
The resulting robust state-feedback controller is 
$u(k) = M V^{-1} x(k)$.
\end{thm}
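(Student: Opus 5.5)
The plan is to reduce the synthesis conditions \eqref{eqn:h2LFTsf1}--\eqref{eqn:h2LFTsf4} to the analysis conditions \eqref{eqn:h2ana1}--\eqref{eqn:h2ana4}, applied to the closed-loop matrices $A_F = A + B_2F$, $C_{0F} = C_0 + D_{02}F$, $C_{1F} = C_1 + D_{12}F$ with $F = MV^{-1}$. Conditions \eqref{eqn:h2LFTsf3} and \eqref{eqn:h2LFTsf4} coincide with \eqref{eqn:h2ana3} and \eqref{eqn:h2ana4}, since $B_1$ is not affected by state feedback. It therefore suffices to show that \eqref{eqn:h2LFTsf1} implies \eqref{eqn:h2ana2}, and \eqref{eqn:h2LFTsf2} implies \eqref{eqn:h2ana1}, for the closed-loop data. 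The two cases are identical with $P_+$ and $P_-$ exchanged, so only the $P_+$ case needs treatment. I read the $(2,5)$ block $XD_{12}^T$ as the typo $XD_{10}^T$, so that the matrix is symmetric and matches its $(5,2)$ block $D_{10}X$.

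The first step is to use the $(1,1)$ block to get invertibility of $V$. Positivity of \eqref{eqn:h2LFTsf1} forces $V^T + V - P_+ > 0$, hence $V^T + V > 0$, so $V$ is nonsingular and $F = MV^{-1}$ is well defined. With this $F$ we have $AV + B_2M = A_FV$, $C_0V + D_{02}M = C_{0F}V$ and $C_1V + D_{12}M = C_{1F}V$. The off-diagonal first block column therefore factors as $\mathcal{G}\,\mathrm{diag}\{V, X\}$, where $\mathcal{G} = \begin{bmatrix} A_F & B_0 \\ C_{0F} & D_{00} \\ C_{1F} & D_{10}\end{bmatrix}$.

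The key step, and the main obstacle, is removing the slack variable $V$. I will use the standard bound of de Oliveira et al.: $(V - P_+)^T P_+^{-1}(V - P_+) \ge 0$ gives
\[
V^T P_+^{-1} V \ge V^T + V - P_+ .
\]
Replacing the $(1,1)$ block $V^T + V - P_+$ by the larger $V^T P_+^{-1} V$ preserves positivity. Next I take a Schur complement with respect to the positive definite lower-right block $\mathcal{R} = \mathrm{diag}\{P_+, X, I\}$. This gives
\[
\mathrm{diag}\{V^T P_+^{-1} V,\, X\} - \mathrm{diag}\{V, X\}^T \mathcal{G}^T \mathcal{R}^{-1} \mathcal{G}\, \mathrm{diag}\{V, X\} > 0 .
\]
Congruence with $\mathrm{diag}\{V^{-1}, X^{-1}\}$ then yields
\[
\mathrm{diag}\{P_+^{-1}, X^{-1}\} - \mathcal{G}^T \mathrm{diag}\{P_+^{-1}, X^{-1}, I\}\, \mathcal{G} > 0 .
\]

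This last inequality is a dual form: it states that the operator $\mathcal{R}^{-1/2}\mathcal{G}\,\mathrm{diag}\{P_+, X\}^{1/2}$ has norm less than one. Because $\|T\| < 1$ if and only if $\|T^T\| < 1$, the same holds for the transposed operator, which gives
\[
\mathrm{diag}\{P_+, X, I\} - \mathcal{G}\, \mathrm{diag}\{P_+, X\}\, \mathcal{G}^T > 0 .
\]
This is exactly \eqref{eqn:h2ana2} for the closed loop. The same argument applied to \eqref{eqn:h2LFTsf2} gives \eqref{eqn:h2ana1}.

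The analysis result of Section~\ref{sec:robh2ana} then yields robust stability, with the $(1,1)$ block of \eqref{eqn:h2ana2} giving a Lyapunov-type inequality for the stable closed loop. It also yields the robust ${\cal H}_2$ bound $\gamma$ for the controller $u = MV^{-1}x$. The points I would double-check are the care needed at the transposition and Schur-complement steps, and the fact that a single $V$ serves both inequalities. Using one $V$ is exactly what makes the two conditions share the same controller $F$.
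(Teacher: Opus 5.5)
Your proof is correct and follows essentially the paper's route. Both rest on the bound $V^T P_+^{-1} V \ge V^T + V - P_+$ (the paper writes it as $P_+^{-1} \ge V^{-T}(V^T+V-P_+)V^{-1}$), a Schur complement, and congruence with ${\rm diag}\{V, X, I, I, I\}$. Your norm/transposition step is just the Schur complement of the same block matrix taken with respect to its upper-left block ${\rm diag}\{P_+^{-1}, X^{-1}\}$, which the paper uses directly. Your reading of the $(2,5)$ block as $XD_{10}^T$ agrees with the paper's intermediate inequality, and deducing nonsingularity of $V$ from $V^T+V > P_+ > 0$ fills a step the paper leaves implicit.
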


\begin{proof}
Define $ F:= M V^{-1} $. Since $P_{+}, P_{-} > 0$, 
\begin{align*}
P_{+}^{-1} \geq V^{-T} ( V^T + V - P_{+} ) V^{-1} \\
P_{-}^{-1} \geq V^{-T} ( V^T + V - P_{-} ) V^{-1}
\end{align*}
for any nonsingular matrix $V$. 
Applying this inequality to the robust ${\cal H}_2$ analysis
conditions (\ref{eqn:h2ana1})-(\ref{eqn:h2ana4}), we have  
\begin{align}
\begin{bmatrix}
V^{-T} (V^T + V - P_+) V^{-1} & & (A + B_2 F)^T & (C_0 + D_{02} F)^T & (C_1 + D_{12} F)^T \\
& X^{-1} & B_0^T & D_{00}^T & D_{10}^T \\
A + B_2 F & B_0 & P_+ & & \\
C_0 + D_{02} F & D_{00} & & X & \\
C_1 + D_{12} F & D_{10} & & & I
\end{bmatrix} & > 0 \label{eqn:h2LFTsfp1} \\
\begin{bmatrix}
V^{-T} (V^T + V - P_-) V^{-1} & & (A + B_2 F)^T & (C_0 + D_{02} F)^T & (C_1 + D_{12} F)^T \\
& X^{-1} & B_0^T & D_{00}^T & D_{10}^T \\
A + B_2 F & B_0 & P_- & & \\
C_0 + D_{02} F & D_{00} & & X & \\
C_1 + D_{12} F & D_{10} & & & I
\end{bmatrix} & > 0 \label{eqn:h2LFTsfp2} \\
\begin{bmatrix}
Q & B_1^T \\
B_1 & P_+ - P_-
\end{bmatrix} & > 0 \label{eqn:h2LFTsfp3}\\
{\rm tr}(Q) & < \gamma^2. \label{eqn:h2LFTsfp4}
\end{align}
and performing the congruence transformation ${\rm diag} \left\{V^T, X, I, I \right\}$ 
leads to the LMI conditions above. 
\end{proof}

It is also possible to consider output-feedback controllers 
for robust ${\cal H}_2$ synthesis.
However, as in the robust ${\cal H}_\infty$ case, the resulting synthesis problem is non-convex, and will not be discussed here. 

\section{Gain-Scheduling ${\cal H}_2$ Output-Feedback Control}
\label{sec:gsh2syn}

If the uncertainty $\Delta(k)$ is measurable in real time, 
we can consider a parameter-dependent output-feedback
controller:
\begin{align}
\begin{bmatrix}
\dot{x}_k(k) \\
u(k) \\
q_k(k)
\end{bmatrix} & = \begin{bmatrix}
A_k & B_{k1} & B_{k0} \\
C_{k1} & 0 & D_{k10} \\
C_{k0} & 0 & D_{k00}
\end{bmatrix}
\begin{bmatrix}
x_k(k) \\
y(k) \\
p_k(k)
\end{bmatrix} \\
p_k(k) & = \Delta(k) q_k(k)
\end{align}
which mirrors the parameter dependence of the LFT plant and provides gain-scheduling control. 

The closed-loop system then takes the form 
\begin{align}
\left[ \begin{array}{c}
x(k+1) \\ \hline
q(k) \\
q_k(k) \\ \hline
e(k)
\end{array} \right] & = \begin{bmatrix}
A_{cl} & B_{0,cl} & B_{1,cl} \\
C_{0,cl} & D_{00,cl} & D_{01,cl} \\
C_{1,cl} & D_{10,cl} & D_{11,cl}
\end{bmatrix}
\left[ \begin{array}{c}
x(k) \\ \hline
p(k) \\
p_k(k) \\ \hline
d(k)
\end{array} \right] \\
\begin{bmatrix}
p(k) \\
p_k(k)
\end{bmatrix} & = \begin{bmatrix}
\Delta(k) & \\
& \Delta(k)
\end{bmatrix}
\begin{bmatrix}
q(k) \\
q_k(k)
\end{bmatrix}. 
\end{align}
The closed-loop matrices are defined as
\begin{align*}
\begin{bmatrix}
A_{cl} & B_{0,cl} & B_{1,cl} \\
C_{0,cl} & D_{00,cl} & D_{01,cl} \\
C_{1,cl} & D_{10,cl} & D_{11,cl}
\end{bmatrix} & = \left[ \begin{array}{cc|cc|c}
A & 0 & B_0 & 0 & B_1 \\
0 & 0 & 0 & 0 & 0 \\ \hline
C_0 & 0 & D_{00} & 0 & 0 \\
0 & 0 & 0 & 0 & 0 \\ \hline
C_1 & 0 & D_{10} & 0 & 0
\end{array} \right] + \left[ \begin{array}{ccc}
0 & B_2 & 0 \\
I & 0 & 0 \\ \hline
0 & D_{02} & 0 \\
0 & 0 & I \\ \hline
0 & D_{12} & 0
\end{array} \right] \\
& \hspace*{0.25in} \times \begin{bmatrix}
A_k & B_{k1} & B_{k0} \\
C_{k1} & 0 & D_{k10} \\
C_{k0} & 0 & D_{k00}
\end{bmatrix}
\left[ \begin{array}{cc|cc|c} 0 & I & 0 & 0 & 0 \\
C_2 & 0 & D_{20} & 0 & D_{21} \\
0 & 0 & 0 & I & 0
\end{array} \right]. 
\end{align*}
By construction, $D_{01,cl} = 0$ and $D_{00,cl} = 0$ hold as desired.

Similar to the state-feedback case, we introduce intermediate matrix variables $V$ (not necessarily symmetric) to reduce conservatism and derive a convex LMI synthesis condition. 

\begin{thm}
The closed-loop LFT system is exponentially stabilized by a gain-scheduling LFT output-feedback controller with robust ${\cal H}_2$ norm less than $\gamma$ if there exist 
positive-definite matrices $T_+, T_-
\in {\cal S}_+^{2 n \times 2 n}$, $Q \in {\cal S}_+^{n_d \times n_d}$,
scaling matrices $L, J \in {\cal D}$, and square matrices $R, S, U \in
{\bf R}^{n \times n}$, such that 
\begin{align}
\left[ \begin{array}{cc|ccc}
\begin{pmatrix} S+S^T & I+U^T \\ U+I & R^T+R \end{pmatrix} - T_+ & & G_{31}^T & G_{41}^T & G_{51}^T \\
& \begin{pmatrix} L & I \\ I & J \end{pmatrix} & G_{32}^T & G_{42}^T & G_{52}^T \\ \hline
G_{31} & G_{32} & T_+ & & \\
G_{41} & G_{42} & & \begin{pmatrix} L & I \\ I & J \end{pmatrix} & \\
G_{51} & G_{52} & & & I
\end{array} \right] & > 0 \label{eqn:h2LFTsyn1} \\
\left[ \begin{array}{cc|ccc}
\begin{pmatrix} S+S^T & I+U^T \\ U+I & R^T+R \end{pmatrix} - T_- & & G_{31}^T & G_{41}^T & G_{51}^T \\
& \begin{pmatrix} L & I \\ I & J \end{pmatrix} & G_{32}^T & G_{42}^T & G_{52}^T \\ \hline
G_{31} & G_{32} & T_- & & \\
G_{41} & G_{42} & & \begin{pmatrix} L & I \\ I & J \end{pmatrix} & \\
G_{51} & G_{52} & & & I
\end{array} \right] & > 0 \label{eqn:h2LFTsyn2} \\
\begin{bmatrix}
Q & H_{21}^T \\
H_{21} & T_+ - T_-
\end{bmatrix} & > 0 \label{eqn:h2LFTsyn3} \\
{\rm tr} (Q) & < \gamma^2 \label{eqn:h2LFTsyn4}
\end{align}
where $G_{3i}, G_{4i},G_{5i}$ and $H_{21}$ are defined as 
\begin{align*}
\begin{bmatrix}
G_{31} & G_{32} \\
G_{41} & G_{42} \\
G_{51} & G_{52}
\end{bmatrix} & = \left[ \begin{array}{cc|cc}
A S & A & B_0 & B_0 J \\
0 & R A & R B_0 & 0 \\ \hline
0 & L C_0 & L D_{00} & 0 \\
C_0 S & C_0 & D_{00} & D_{00} J \\ \hline C_1 S & C_1 & D_{10} &
D_{10} J
\end{array} \right] + \left[ \begin{array}{ccc}
0 & B_2 & 0 \\
I & 0 & 0 \\ \hline
0 & 0 & I \\
0 & D_{02} & 0 \\ \hline
0 & D_{12} & 0
\end{array} \right] \\
& \hspace*{0.25in} \times \begin{bmatrix}
\hat{A}_k & \hat{B}_{k1} & \hat{B}_{k0} \\
\hat{C}_{k1} & 0 & \hat{D}_{k10} \\
\hat{C}_{k0} & 0 & \hat{D}_{k00}
\end{bmatrix}
\left[ \begin{array}{cc|cc}
I & 0 & 0 & 0 \\
0 & C_2 & D_{20} & 0 \\
0 & 0 & 0 & I
\end{array} \right] \\
H_{21} & = \begin{bmatrix}
B_1 \\
R B_1
\end{bmatrix} + \begin{bmatrix}
0 & B_2 & 0 \\
I & 0 & 0
\end{bmatrix}
\begin{bmatrix}
\hat{A}_k & \hat{B}_{k1} & \hat{B}_{k0} \\
\hat{C}_{k1} & 0 & \hat{D}_{k10} \\
\hat{C}_{k0} & 0 & \hat{D}_{k00}
\end{bmatrix}
\begin{bmatrix}
0 \\
D_{21} \\
0
\end{bmatrix}. 
\end{align*}
The LFT output-feedback controller gains are then recovered by inverting the transformation 
\begin{align}
\begin{bmatrix}
A_k & B_{k1} & B_{k0} \\
C_{k1} & 0 & D_{k10} \\
C_{k0} & 0 & D_{k00}
\end{bmatrix} & = \begin{bmatrix}
M & R B_2 & 0 \\
0 & I & 0 \\
0 & L D_{02} & L_2
\end{bmatrix}^{-1}
\left( \begin{bmatrix}
\hat{A}_k & \hat{B}_{k1} & \hat{B}_{k0} \\
\hat{C}_{k1} & 0 & \hat{D}_{k10} \\
\hat{C}_{k0} & 0 & \hat{D}_{k00}
\end{bmatrix} - \begin{bmatrix}
R A S & 0 & R B_0 J \\
0 & 0 & 0 \\
L C_0 S & 0 & L D_{00} J
\end{bmatrix} \right) \nonumber \\
& \hspace*{0.25in} \times \begin{bmatrix}
N & 0 & 0 \\
C_2 S & I & D_{20} J \\
0 & 0 & J_2^T
\end{bmatrix}^{-1}
\end{align}
where $M N = U - R S$ and $L_2 J_2^T = I - L J$.
\end{thm}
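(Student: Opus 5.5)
The plan is to reduce the claim to the analysis conditions (\ref{eqn:h2ana1})--(\ref{eqn:h2ana4}), applied to the closed-loop LFT system with the augmented uncertainty ${\rm diag}\{\Delta,\Delta\}$. The scaling matrix will be block structured, $X_{cl}\in{\rm diag}\{{\cal D},{\cal D}\}$ in the appropriate sense. We then linearize with the same two devices used in the state-feedback theorem: the bound $P^{-1}\ge V^{-T}(V^T+V-P)V^{-1}$, and a congruence transformation. Here the congruence is the standard Scherer/Gahinet--Apkarian one, which also renders the controller parametrization affine.

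\textbf{Step 1: analysis conditions in dual form.} Write (\ref{eqn:h2ana1})--(\ref{eqn:h2ana2}) for the closed loop and Schur-complement them into the $5\times 5$ block form used in (\ref{eqn:h2LFTsfp1}). The top-left blocks are $P_\pm^{-1}$ and $X_{cl}^{-1}$, and the lower diagonal is ${\rm diag}\{P_\pm,X_{cl},I\}$. Replace $P_\pm^{-1}$ by the lower bound with a common nonsingular $V\in{\bf R}^{2n\times 2n}$, exactly as in the previous proof; this yields sufficient conditions. The variable $V$ is shared between the ``$+$'' and ``$-$'' inequalities, which is what makes a common controller change of variables possible.

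\textbf{Step 2: congruence and change of variables.} Partition the state-related variable as
\[
V=\begin{bmatrix} S & \star\\ N & \star\end{bmatrix},\qquad V^{-1}=\begin{bmatrix} R^T & \star\\ M^T & \star\end{bmatrix},
\]
or the transposed convention, chosen so that $RS+MN=U$, matching $MN=U-RS$ in the statement. Partition the scaling similarly, with $X_{cl}$ having $(1,1)$ blocks $J$ (and $L$ in the inverse), and with $L_2J_2^T=I-LJ$ defining the coupling blocks. Apply the congruence by $\Pi_1={\rm diag}\{\Pi_V,\Pi_X\}$ on the first two block rows, and by the transposed counterparts $\Pi_V^{-T}$-type factors on the $P_\pm$ and $X$ rows, where
\[
\Pi_V=\begin{bmatrix} S & I\\ N & 0\end{bmatrix},
\]
modulo orientation. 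Define $T_\pm$ as $P_\pm$ transformed by the same outer factor. Then:
\begin{itemize}
\item $\Pi_V^T(V^T+V)\Pi_V$-type terms produce $\begin{pmatrix} S+S^T & I+U^T\\ U+I & R+R^T\end{pmatrix}$;
\item $X_{cl}$ and $X_{cl}^{-1}$ become $\begin{pmatrix} L & I\\ I & J\end{pmatrix}$;
\item the off-diagonal blocks become $G_{3i},G_{4i},G_{5i}$.
\end{itemize}
Nonlinear products of controller data and $R,S,L,J,M,N,L_2,J_2$ are absorbed into $(\hat A_k,\ldots,\hat D_{k00})$ through the displayed transformation, whose inverse is the controller-recovery formula. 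The same factor applied to (\ref{eqn:h2ana3}) turns $B_{1,cl}$ into $H_{21}$ and $P_+-P_-$ into $T_+-T_-$, while $Q$ and ${\rm tr}(Q)<\gamma^2$ are unchanged.

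\textbf{Step 3: invertibility and recovery.} Positivity of the $(1,1)$ block in (\ref{eqn:h2LFTsyn1}) implies that $\begin{pmatrix} S+S^T & I+U^T\\ U+I & R+R^T\end{pmatrix}>0$. From this we argue, after a small perturbation if necessary, that $U-RS$ can be factored with $M,N$ nonsingular. Similarly $\begin{pmatrix}L & I\\ I & J\end{pmatrix}>0$ gives $I-LJ$ nonsingular, so $L_2,J_2$ are invertible and $X_{cl}>0$ is reconstructible. The two outer factors in the recovery formula are then invertible, so the controller is well defined. Reversing Steps 2 and 1 yields the closed-loop analysis conditions, and hence exponential stability and robust ${\cal H}_2$ norm below $\gamma$. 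This uses the commutation of $X_{cl}$ with ${\rm diag}\{\Delta,\Delta\}$, which holds since $L,J\in{\cal D}$ and the coupling blocks can be chosen in ${\cal D}$ as well.

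\textbf{Main obstacle.} The hard step is Step 2's bookkeeping. We must choose the orientation of the partitions of $V$, $V^{-1}$ and $X_{cl}$ so that the products of $\Pi$-factors with the closed-loop matrices land exactly on the displayed $G$ and $H_{21}$, including the terms $RAS$, $RB_0J$, $LC_0S$ and $LD_{00}J$. We must also keep the dependence on the non-symmetric $V$ affine even though it couples to $P_\pm$ only through $V^T+V-P_\pm$. I would first verify the nominal block (no $\Delta$ channels) against the standard LMI parametrization, then add the scaling channel by analogy.
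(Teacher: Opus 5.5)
Your plan is essentially the paper's proof: the same relaxation $P_\pm^{-1}\ge V^{-T}(V+V^T-P_\pm)V^{-1}$ with a common $V$, the same partitions of $V,V^{-1}$ and $X_{cl},X_{cl}^{-1}$ (with $U=RS+MN$ free and $L_2J_2^T=I-LJ$), and the same congruence built from $\Pi_V=\begin{bmatrix}S&I\\N&0\end{bmatrix}$. Your Step~3 goes further than the paper, which only asserts nonsingularity of the outer factors: perturbing so that $U-RS$ is nonsingular, choosing the coupling scaling blocks so $X_{cl}$ commutes with ${\rm diag}\{\Delta,\Delta\}$ (e.g.\ $J_2=I$, $J_3=(J-L^{-1})^{-1}$ works), and running the congruence backwards supplies the sufficiency-direction details.

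One bookkeeping correction is needed. On the $P_\pm$ and $X_{cl}$ rows the factors must be $Z=V^{-1}\Pi_V=\begin{bmatrix}I&R^T\\0&M^T\end{bmatrix}$ and $W_1=\begin{bmatrix}L&I\\L_2^T&0\end{bmatrix}=X_{cl}^{-1}\Pi_X$, where $\Pi_X=\begin{bmatrix}I&J\\0&J_2^T\end{bmatrix}$; they are not $\Pi_V^{-T}$-type factors. Only with $Z$ and $W_1$ do you get $Z^TP_\pm Z=T_\pm$, $Z^TA_{cl}VZ=G_{31}$, $Z^TB_{1,cl}=H_{21}$, and $\Pi_V^TV^{-T}(V+V^T-P_\pm)V^{-1}\Pi_V=Z^T(V+V^T-P_\pm)Z$, which is the displayed $(1,1)$ block.
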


\begin{proof}
Since for any nonsingular matrix $V$, we have 
\begin{align*}
P_{+,cl}^{-1} & \geq V^{-T} ( V + V^T - P_{+,cl} ) V^{-1} \\
P_{-,cl}^{-1} & \geq V^{-T} ( V + V^T - P_{-,cl} ) V^{-1}, 
\end{align*}
a sufficient condition that guarantees closed-loop stability and the robust ${\cal H}_2$
norm for the LFT system is
\begin{align}
\begin{bmatrix}
V^{-T} ( V + V^T - P_{+,cl} ) V^{-1} & & A_{cl}^T & C_{0,cl}^T & C_{1,cl}^T \\
& X^{-1} & B_{0,cl}^T & D_{00,cl}^T & D_{10,cl}^T \\
A_{cl} & B_{0,cl} & P_{+,cl} & & \\
C_{0,cl} & D_{00,cl} & & X & \\
C_{1,cl} & D_{10,cl} & & & I
\end{bmatrix} & > 0 \label{eqn:clptest1} \\
\begin{bmatrix}
V^{-T} ( V + V^T - P_{-,cl} ) V^{-1} & & A_{cl}^T & C_{0,cl}^T & C_{1,cl}^T \\
& X^{-1} & B_{0,cl}^T & D_{00,cl}^T & D_{10,cl}^T \\
A_{cl} & B_{0,cl} & P_{-,cl} & & \\
C_{0,cl} & D_{00,cl} & & X & \\
C_{1,cl} & D_{10,cl} & & & I
\end{bmatrix} & > 0 \label{eqn:clptest2} \\
\begin{bmatrix}
Q & B_{1,cl}^T \\
B_{1,cl} & P_{+,cl} - P_{-,cl}
\end{bmatrix} & > 0 \label{eqn:clptest3} \\
{\rm tr} (Q) & < \gamma^2. \label{eqn:clptest4}
\end{align}
Partition $V, V^{-1}$ according to plant and controller state dimensions $n, n_k$ as
\begin{align*}
V = \begin{bmatrix} S & \star \\
N & \star \end{bmatrix}, \qquad
V^{-1} = \begin{bmatrix} R^T & \star \\
M^T & \star \end{bmatrix}, \qquad U := R S + M N, 
\end{align*}
and similarly partition $X, X^{-1}$ as
\begin{align*}
X = \begin{bmatrix}
J & J_2 \\
J_2^T & J_3
\end{bmatrix}, \qquad
X^{-1} = \begin{bmatrix}
L & L_2 \\
L_2^T & L_3
\end{bmatrix}, \qquad L_2 J_2^T = I - L J. 
\end{align*}

Applying the congruence transformation suggested in \cite{DeGB2002}, let 
\begin{align*}
Z := \begin{bmatrix}
I & R^T \\
0 & M^T
\end{bmatrix}. 
\end{align*}
Then
\begin{align*}
V Z = \begin{bmatrix} S & I \\
N & 0 \end{bmatrix}, \qquad Z^T V Z = \begin{bmatrix}
S & I \\
U & R
\end{bmatrix}. 
\end{align*}
Define 
\begin{align*}
W_1 = \begin{bmatrix}
L & I \\
L_2^T & 0
\end{bmatrix}, \qquad
W_2 = \begin{bmatrix}
I & J \\
0 & J_2^T
\end{bmatrix} 
\end{align*}
so that $X W_1 = W_2$ and 
$W_1^T X W_1 = \begin{bmatrix}
L & I \\
I & J
\end{bmatrix}$. 
Then, the transformed closed-loop system satisfies 
\begin{align*}
& \begin{bmatrix} Z^T \left( A_{cl} V \right) Z & Z^T \left(
B_{0,cl} X \right) W_1 & Z^T
B_{1,cl} \\
W_1^T \left( C_{0,cl} V \right) Z & W_1^T \left( D_{00,cl} X \right) W_1 & W_1^T D_{01,cl} \\
\left( C_{1,cl} V \right) Z & \left( D_{10,cl} X \right) W_1 &
D_{11,cl}
\end{bmatrix} \nonumber \\
& = \left[ \begin{array}{cc|cc|c}
A S & A & B_0 & B_0 J & B_1 \\
0 & R A & R B_0 & 0 & R B_1 \\ \hline
0 L C_0 & 0 & L D_{00} & 0 & 0 \\
C_0 S & C_0 & D_{00} & D_{00} J & 0 \\ \hline C_1 S & C_1 & D_{10}
& D_{10} J & 0
\end{array} \right] + \left[ \begin{array}{ccc}
0 & B_2 & 0 \\
I & 0 & 0 \\ \hline
0 & 0 & I \\
0 & D_{02} & 0 \\ \hline
0 & D_{12} & 0
\end{array} \right] \\
& \hspace*{0.25in} \times \begin{bmatrix}
\hat{A}_k & \hat{B}_{k1} & \hat{B}_{k0} \\
\hat{C}_{k1} & 0 & \hat{D}_{k10} \\
\hat{C}_{k0} & 0 & \hat{D}_{k00}
\end{bmatrix}
\left[ \begin{array}{cc|cc|c}
I & 0 & 0 & 0 & 0 \\
0 & C_2 & D_{20} & 0 & D_{21} \\
0 & 0 & 0 & I & 0
\end{array} \right]. 
\end{align*}
The transformed controller matrices relate to the original  
$A_k, B_{k1}, B_{k0}, C_{k1}, C_{k0}, D_{k10}, D_{k00}$ via
\begin{align}
\begin{bmatrix}
\hat{A}_k & \hat{B}_{k1} & \hat{B}_{k0} \\
\hat{C}_{k1} & 0 & \hat{D}_{k10} \\
\hat{C}_{k0} & 0 & \hat{D}_{k00}
\end{bmatrix} & = \begin{bmatrix}
R A S & 0 & R B_0 J \\
0 & 0 & 0 \\
L C_0 S & 0 & L D_{00} J
\end{bmatrix} \nonumber \\
& \hspace*{0.25in} + \begin{bmatrix}
M & R B_2 & 0 \\
0 & I & 0 \\
0 & L D_{02} & L_2
\end{bmatrix}
\begin{bmatrix}
A_k & B_{k1} & B_{k0} \\
C_{k1} & 0 & D_{k10} \\
C_{k0} & 0 & D_{k00}
\end{bmatrix}
\begin{bmatrix}
N & 0 & 0 \\
C_2 S & I & D_{20} J \\
0 & 0 & J_2^T
\end{bmatrix}. \label{eqn:transform}
\end{align}

Multiplying ${\rm diag} \left\{I, I, Z^T, W_1^T, I \right\}$ from the left and its transpose from the right in the 
(\ref{eqn:clptest1}), and defining $T_+ = Z^T P_{+,cl} Z, T_- 
= Z^T P_{-,cl} Z$, we get the condition (\ref{eqn:h2LFTsyn1}). The second condition
(\ref{eqn:h2LFTsyn2}) follows analogously by congruent transformation ${\rm diag} \left\{ I, Z^T \right\}$ to the 
eqn. (\ref{eqn:clptest2}).

Finally, since the matrices
\[
\begin{bmatrix}
M & R B_2 & 0 \\
0 & I & 0 \\
0 & L D_{02} & L_2
\end{bmatrix} \qquad
\begin{bmatrix}
N & 0 & 0 \\
C_2 S & I & D_{20} J \\
0 & 0 & J_2^T
\end{bmatrix}
\]
are nonsingular, the gain-scheduling output-feedback controller can be recovered by inverting eqn. (\ref{eqn:transform}).
\end{proof}

This approach provides a convex synthesis condition for gain-scheduling ${\cal H}_2$ control while reducing conservatism compared to direct synthesis methods.

\section{Illustrative Examples}

In this section, two representative examples are presented to demonstrate the effectiveness of the proposed robust and gain-scheduling ${\cal H}_2$ control framework. 
The first example considers a two-disk mechanical system and illustrates robust state-feedback synthesis. 
The second example involves an active magnetic bearing (AMB) system and demonstrates gain-scheduled output-feedback 
${\cal H}_2$ control for parameter-dependent dynamics.

\subsection{Two-Disk System}

We first consider a two-disk system previously studied in \cite{WuL2004} to illustrate the robust state-feedback synthesis method developed in Section~\ref{sec:robh2syn}. The dynamics of the system are described by 
\begin{align}
M_1 \left[ \ddot{r}_1(t) - \Omega_1^2(t) r_1(t) \right] & = -b
\dot{r}_1(t)
- k ( r_1(t) + r_2(t) ) + f(t) \label{eqn:two-disk1} \\
M_2 \left[ \ddot{r}_2(t) - \Omega_2^2(t) r_2(t) \right] & = -b
\dot{r}_2(t) - k ( r_1(t) + r_2(t) ) \label{eqn:two-disk2}
\end{align}
where $r_1, r_2$ denote the positions of the two sliders relative to the center, and $\Omega_1, \Omega_2$ are the angular velocities of the two rods, varying in the intervals
$[0, 3] rad/sec$ and $[0, 5] rad/sec$, respectively. 
The control input $f$ acts on the first slider. The system parameters are $M_1 = 1.0 kg, M_2 = 0.5 kg$, damping
coefficient $b = 1.0 kg/sec$, and spring constant $k = 200 N/m$.

Define normalized uncertainty parameters 
\[
\delta_1 = \frac{\Omega_1^2}{4.5}-1, \ \delta_2 =
\frac{\Omega_2^2}{12.5}-1 
\]
so that $\delta_1, \delta_2 \in [-1, 1]$.
Let $x_1 = r_1$, $x_2 = r_2$, $x_3 = \dot{r}_1$, $x_4 =
\dot{r}_2$, $u = f$, and $y = r_2$, The plant described by  (\ref{eqn:two-disk1})-(\ref{eqn:two-disk2}) can then be represented in an LFT form as 
\begin{align}
\begin{bmatrix}
\dot{x} \\
q \\
y
\end{bmatrix} & =
\left[ \begin{array}{cccc|cc|cc|c}
0 & 0 & 1 & 0 & 0 & 0 & 0 & 0 & 0 \\
0 & 0 & 0 & 1 & 0& 0 & 0 & 0 & 0 \\
4.5 - \frac{k}{M_1} & -\frac{k}{M_1} & -\frac{b}{M_1} & 0
& 4.5 & 0 & \frac{0.1}{M_1} & 0 & \frac{1}{M_1} \\
-\frac{k}{M_2} & 12.5 - \frac{k}{M_2} & 0 & -\frac{b}{M_2} & 0 &
12.5 & 0 & \frac{0.1}{M_2} & 0 \\ \hline
1 & 0 & 0 & 0 & 0 & 0 & 0 & 0 & 0 \\
0 & 1 & 0 & 0 & 0 & 0 & 0 & 0 & 0 \\ \hline 0 & 1 & 0 & 0 & 0 & 0
& 0 & 0 & 0
\end{array} \right]
\begin{bmatrix}
x \\
p \\
d \\
u
\end{bmatrix} \\
p & = \begin{bmatrix} \delta_1 & \\ & \delta_2 \end{bmatrix} q
\end{align}

All states are assumed to be measurable for state-feedback implementation. The design objectives are specified using the following weighting functions:
\begin{align*}
W_e(s) & = \frac{0.3 s + 1.2}{s + 0.04}, \qquad W_u(s) = \frac{s +
0.1}{0.01 s + 125}, \qquad
W_n(s) = \frac{s + 0.4}{0.01 s + 400}, \\
W_a(s) & = 0.00001, \qquad Act(s) = \frac{1}{0.01 s + 1}.
\end{align*}

The continuous-time plant is discretized using a zero-order hold with a sampling time of $0.01 sec$.
Robust ${\cal H}_2$ and ${\cal H}_\infty$ state-feedback 
controllers are synthesized and compared. 
The resulting induced $\ell_2$-norm performance levels are summarized in Table \ref{tab:perf_sf}.

\begin{table}[htb]
\caption{Performance comparison of robust state-feedback control.}
\label{tab:perf_sf}
\begin{center}
\begin{tabular}{c|c} \hline
Method & Induced ${\ell}_2$ norm \\ \hline
${\cal H}_{\infty}$ state feedback control & 0.898 \\
${\cal H}_2$ state feedback control & 0.478 \\ \hline
\end{tabular}
\end{center}
\end{table}

It is evident that the proposed robust ${\cal H}_2$ synthesis approach achieves substantially less conservative performance than the corresponding ${\cal H}_\infty$ design. 
This improvement can be attributed to the smaller disturbance set inherent in the robust ${\cal H}_2$ formulation, which better reflects stochastic disturbance characteristics.

\subsection{Active Magnetic Bearing System}

The second example considers an active magnetic bearing (AMB) system to demonstrate the proposed gain-scheduling output-feedback ${\cal H}_\infty$ control approach \cite{Wu2001}. 
Due to the linear dependence of the plant dynamics on rotor speed, the nonlinear gyroscopic equations can be approximated by a linear parameter-varying (LPV) model: 
\begin{align}
\ell \ddot{\theta} & = -\frac{\rho J_a}{J_r} \ell \dot{\psi} + \frac{1}{m}
\left(-4 c_2 \ell \theta + 2 c_1 \phi_\theta + f_{d \theta}\right) \label{eqn:ambdyn1} \\
\ell \ddot{\psi} & = \frac{\rho J_a}{J_r} \ell \dot{\theta} +
\frac{1}{m}
\left(-4 c_2 \ell \psi + 2 c_1 \phi_\psi + f_{d \psi}\right) \\
\dot{\phi_\theta} & = \frac{1}{N} \left( e_\theta + 2 d_2 \ell \theta - d_1 \phi_\theta
\right) \\
\dot{\phi_\psi} & = \frac{1}{N} \left( e_\psi + 2 d_2 \ell \psi - d_1 \phi_\psi \right) \label{eqn:ambdyn4}
\end{align}
where the rotor speed $\rho$ serves as the scheduling parameter. 
$\theta$, $\psi$ are the Euler angles denoting the orientation of rotor centerline. 
$J_a$, $J_r$ are the moment of inertia of the rotor in axial and radial directions, respectively. 
$\phi_\theta$, $\phi_\psi$ are the differential magnetic flux from electromagnetic pairs, $e_\theta$, $e_\psi$ are the corresponding differences of electric voltage. 
$f_{d \theta}$, $f_{d \psi}$ are disturbance forces caused by
gravity, modeling errors, imbalances, etc. 
The constants $c_1$, $c_2$, $d_1$, and $d_2$ depend on $\Phi_0$, $G_0$, $R$, $A$, $N$, $\nu_0$, and the bearing geometry, as defined in
\begin{align*}
c_1 & = 2 k \Phi_0 \left( 1 + \frac{2 G_0}{\pi h} \right), \qquad
c_2 = \frac{2 k \Phi_0^2}{\pi h}, \\
d_1 & = \frac{2 R G_0}{\nu_0 A N}, \qquad d_2 = \frac{2 R \Phi_0}{\nu_0 A N}
\end{align*} 
The system parameters are summarized in Table \ref{tab:AMB-para}.
\begin{table}[!htb]
\caption{Active magnetic bearing parameters.}
\label{tab:AMB-para}
\begin{center}
\begin{tabular}{c|c} \hline
Parameter & Value \\ \hline
$A$ area of each pole ($mm^2$) & $1531.79$ \\
$h$ pole width ($mm$) & $40.00$ \\
$G_0$ nominal gap($mm$) & $0.55$ \\
$J_r$ radial moment of inertia ($kg \cdot m^2$) & $0.333$ \\
$J_a$ axial moment of inertia ($kg \cdot m^2$) & $0.0136$ \\
$\ell$ half the length of the shaft ($m$) & $0.13$ \\
$k$ & $4.6755576 \times 10^8$ \\
$N$ number of coil turns & $400$ \\
$R$ coil resistance ($Ohm$) & $14.6$ \\
$\Phi_0$ nominal airgap ($Wb$) & $2.09 \times 10^{-4}$ \\ \hline
\end{tabular} 
\end{center}
\end{table}

Let the state, disturbance, and control vectors as 
\[
x^T = \begin{bmatrix} \ell \theta & \ell \psi & \ell
\dot{\theta} & \ell \dot{\psi} & \phi_\theta &\phi_\psi
\end{bmatrix}, \quad d^T = \begin{bmatrix} f_{d \theta} & f_{d \psi}
\end{bmatrix}, \quad u^T = \begin{bmatrix} e_\theta & e_\psi
\end{bmatrix}, 
\]
the system can be written in LPV state-space form. 
\begin{align*}
\dot{x} = A(\rho) x + B_1 d + B_2 u \\
e = C_1 x + D_{11} d + D_{12} u \\
y = C_2 x + D_{21} d + D_{22} u
\end{align*}
where the state-space data are
\begin{align*}
A(\rho) & = \begin{bmatrix}
0 & 0 & 1 & 0 & 0 & 0 \\
0 & 0 & 0 & 1 & 0 & 0 \\
-\frac{4 c_2}{m} & 0 & 0 & -\frac{\rho J_a}{J_r} & \frac{2 c_1}{m} & 0 \\
0 & -\frac{4 c_2}{m} & \frac{\rho J_a}{J_r} & 0 & 0 & \frac{2 c_1}{m} \\
\frac{2 d_2}{N} & 0 & 0 & 0 & -\frac{d_1}{N} & 0 \\
0 & \frac{2 d_2}{N} & 0 & 0 & 0 & -\frac{d_1}{N}
\end{bmatrix} \\
B_1 & = 0_{6 \times 2}, \qquad B_2 = \frac{1}{N} \begin{bmatrix}
0_{4 \times 2} \\
I_2
\end{bmatrix} \\
C_1 & = \begin{bmatrix}
\begin{array}{cc} I_2 & 0_{2 \times 4} \end{array} \\
0_{2 \times 6}
\end{bmatrix}, \qquad D_{11} = 0_{4 \times 2},
\qquad D_{12} = \begin{bmatrix}
0_{2 \times 2} \\
I_2
\end{bmatrix} \\
C_2 & = \begin{bmatrix}
I_2 & 0_{2 \times 4}
\end{bmatrix}, \qquad D_{21} = I_2, \qquad D_{22} = 0_{2 \times 2}. 
\end{align*}  
The rotor speed $\rho$ is assumed to be available in real-time  and varies within $[315, 1100] rad/s$, a range in which strong gyroscopic effects are present. 
In automatic balancing design, $f_{d \theta}$ and $f_{d \psi}$ are typically modeled as sensor noise on the measured rotor displacement. 
The affine LPV model is subsequently transformed into an LFT representation for gain-scheduling synthesis. 

The control objective is to stabilize the system over the full speed range while minimizing disturbance effects on rotor displacement. 
These objectives are captured using weighting functions derived from frozen-parameter designs, yielding the weighted interconnection shown in Fig.~\ref{fig:wtop}.

\begin{figure}[!htb]
\centering 
\includegraphics[width=3.5in]{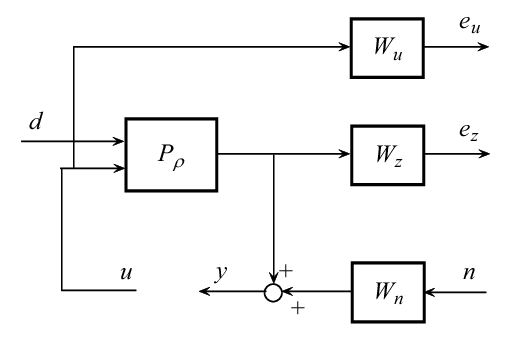} \\
\caption{Weighted open-loop interconnection for the magnetic
bearing system.} 
\label{fig:wtop}
\end{figure}

The weighting functions are chosen as
\begin{align*}
W_z(s) & = \frac{10 (s + 8)}{s + 0.001} I_2, \qquad
W_u(s) = \frac{0.01 (s + 100)}{s + 100000} I_2 \\
W_n(s) & = 0.001 I_2. 
\end{align*}

The continuous-time system is discretized with a sampling time of $0.01 sec$. 
Table~\ref{tab:perf_of} compares the robust 
${\cal H}_2$ and ${\cal H}_\infty$ gain-scheduling designs. 
As in the previous example, the robust ${\cal H}_2$ controller achieves significantly improved performance with reduced conservatism. 
 
\begin{table}[!htb]
\caption{Performance comparison of different gain-scheduling control.}
\label{tab:perf_of}
\begin{center}
\begin{tabular}{c|c} \hline
Method & Induced ${\ell}_2$ norm \\ \hline
${\cal H}_{\infty}$ output feedback control & 8.458 \\
${\cal H}_2$ output feedback control & 2.640 \\ \hline
\end{tabular}
\end{center}
\end{table}

Time-domain simulations are shown in Fig.~\ref{fig:simu}, where step disturbances of magnitude 
$0.001 m$ and opposite signs are applied. 
The gain-scheduled ${\cal H}_2$ controller effectively suppresses disturbances and rapidly drives the rotor displacements to zero across the entire speed range.

\begin{figure}[!htp]
\centering
\begin{minipage}[c]{1\textwidth}
  \begin{minipage}[c]{0.5\textwidth}
  \centering
  \includegraphics[width=2.7in]{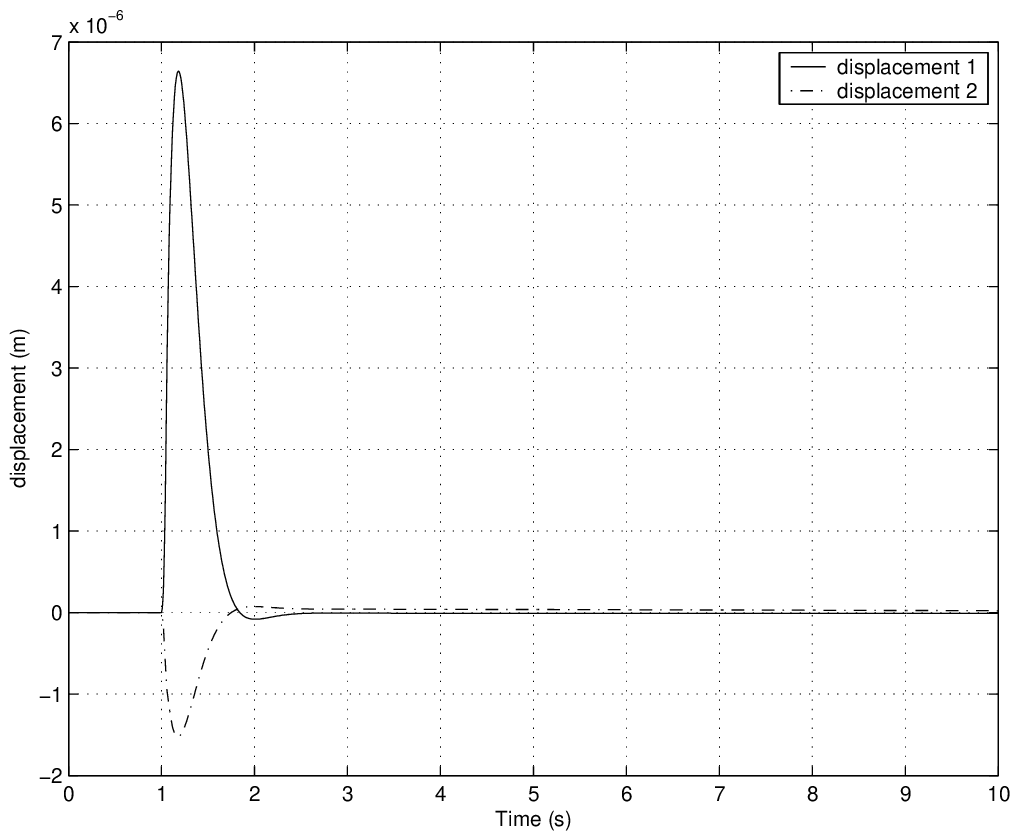} \\
  (a) Displacement
  \end{minipage}
  \begin{minipage}[c]{0.5\textwidth}
  \centering
  \includegraphics[width=2.7in]{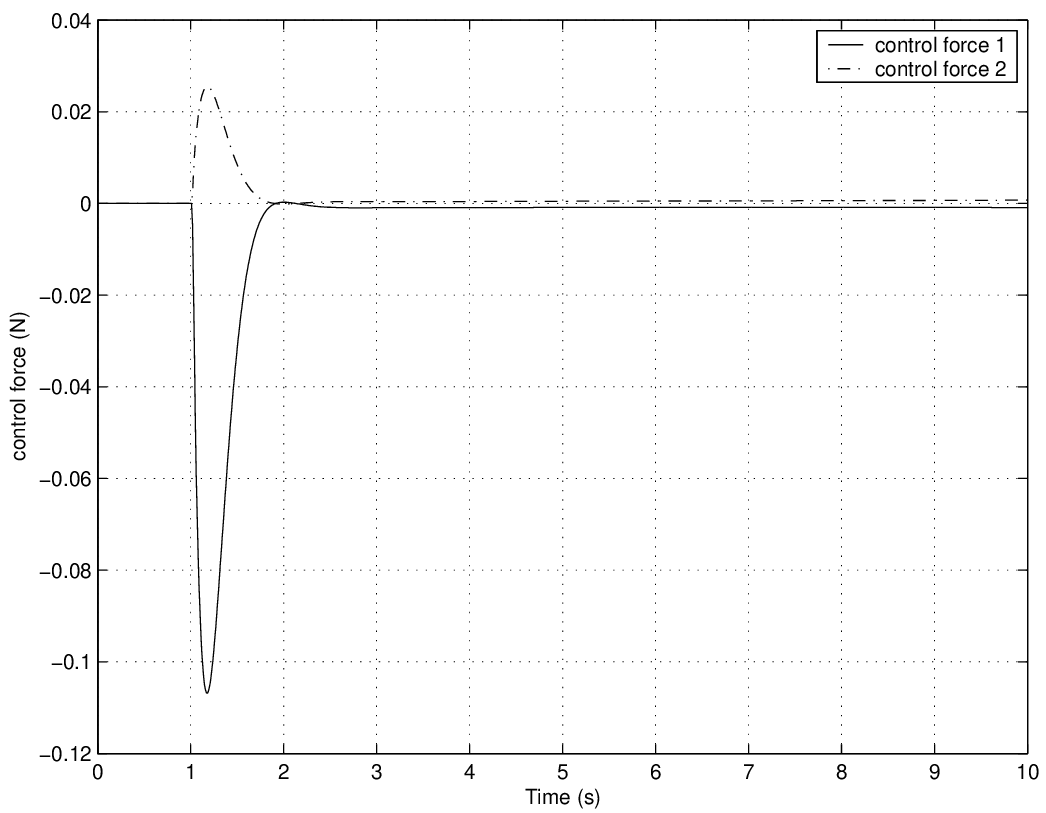} \\
  (b) Control force
  \end{minipage}
\end{minipage}
\caption{Time-domain simulation of gain-scheduling ${\cal H}_2$ control}
\label{fig:simu}
\end{figure}

\section{Conclusion}
\label{sec:conc}

This paper has developed a unified robust and gain-scheduling control framework based on a robust ${\cal H}_2$ performance measure, providing an alternative to classical worst-case design methodologies. 
In contrast to ${\cal H}_\infty$ control, which characterizes performance through induced $\ell_2$-gain and is inherently tailored to adversarial disturbances, the proposed approach is theoretically grounded in the white-noise and impulse-response interpretation of the ${\cal H}_2$ norm. 
By embedding uncertainty descriptions directly into the ${\cal H}_2$ analysis, the resulting performance bounds remain robust while avoiding the excessive conservatism typically associated with worst-case criteria. 
The derived synthesis conditions establish explicit links between robust stability, gain scheduling, and stochastic performance, thereby extending classical ${\cal H}_2$ theory beyond the linear time-invariant setting. 
Numerical examples confirm that the proposed robust ${\cal H}_2$ design achieves improved disturbance rejection with reduced conservatism compared to standard ${\cal H}_\infty$-based controllers, particularly in systems dominated by stochastic disturbances. 

Several directions for future research can be envisioned. First, the extension of the proposed robust 
${\cal H}_2$ synthesis to output-feedback controllers with dynamic compensation could broaden practical applicability, particularly for systems with partial state measurements. Second, integrating performance objectives beyond the 
${\cal H}_2$ norm, such as mixed ${\cal H}_2/{\cal H}_\infty$ 
criteria or regional pole placement, may yield controllers 
that better balance robustness and transient performance.

\end{document}